\newtheorem{proposition}{Proposition}
\newtheorem{theorem}{Theorem}
\newtheorem{definition}{Definition}
\newenvironment{proof}{\textit{Proof:}}{}
\newcommand{\qed}{\hfill$\Box$}
\newcommand{\mktype}[1]{\mathsf{#1}}
\newcommand{\Qbit}{\mktype{Qbit}}
\newcommand{\ctxt}[2]{{#1}[#2]}
\newcommand{\Prob}[1]{\boxplus_{#1}}
\newcommand{\ms}[1]{|{#1}|^{2}}
\newcommand{\cnfig}[3]{({#1};{#2};{#3})}
\newcommand{\ptrns}[3]{{#1}\stackrel{#2}{\longrightarrow}{#3}}
\newcommand{\typed}[2]{{#1}\mathrel{\!:\!}{#2}}
\newcommand{\ket}[1]{|#1\rangle}
\newcommand{\nil}{\mathbf{0}}
\renewcommand{\parallel}{\mathbin{\mid}}
\newcommand{\inp}[2]{{#1}?{[#2]}}
\newcommand{\outp}[2]{{#1}!{[#2]}}
\renewcommand{\vec}[1]{\widetilde{#1}}
\newcommand{\new}{\mathsf{new}\ }
\newcommand{\qbit}{\mathsf{qbit}\ }
\newcommand{\bit}{\mathsf{bit}}
\newcommand{\chant}[1]{\widehat{~}[#1]}
\newcommand{\qgate}[1]{\mathsf{#1}}
\newcommand{\pname}[1]{\mathit{#1}}
\newcommand{\action}[1]{\{#1\}}
\newcommand{\trans}{\mathbin{*\!\!=}}
\newcommand{\sep}{\,.\,}
\newcommand{\measure}{\mathsf{measure}\ }
\newcommand{\matr}[4]{\begin{pmatrix}{#1}&{#2}\\{#3}&{#4}\end{pmatrix}}
\newcommand{\qstore}[2]{[#1 \mapsto #2]}
\newcommand{\transition}[1]{\stackrel{#1}{\longrightarrow}}
\newcommand{\ptrans}[1]{\stackrel{#1}\rightsquigarrow}
\newcommand{\weaktrans}[1]{\stackrel{#1}{\Longrightarrow}}
\newcommand{\opttrans}[1]{\stackrel{#1}{\longrightarrow}^{+}}
\newcommand{\cfgprob}[1]{#1 \bullet}
\newcommand{\cfgprobsum}{\boxplus}
\newcommand{\cfgplus}[1]{\cfgprobsum~\cfgprob{#1}}
\newcommand{\cfgdistsum}{\oplus}
\newcommand{\Dist}[2]{\cfgdistsum_{#1}~#2~}
\newtheorem{example}{Example}
\newcommand{\subst}[2]{\{{#1}/{#2}\}}
\newcommand{\bra}[1]{\langle#1|}
\newcommand{\ketbra}[2]{\ket{#1}\bra{#2}}
\newcommand{\parcomp}{~\Vert~}
\newcommand{\rhoe}{\rho_E}
\newcommand{\gX}{\qgate{X}}
\newcommand{\gH}{\qgate{H}}
\newcommand{\ltrm}[3]{\lambda{#1}\bullet{#2}; {#3}}
\newcommand{\ltrmshort}[2]{\lambda{#1}\bullet{#2}}
\newcommand{\states}{\ensuremath{\mathcal{S}}}
\newcommand{\nstates}{\ensuremath{\mathcal{S}_n}}
\newcommand{\pstates}{\ensuremath{\mathcal{S}_p}}
\newcommand{\pbsim}{\leftrightarroweq}
\newcommand{\fpbsim}{\leftrightarroweq^c}
\renewcommand{\vec}[1]{\widetilde{#1}}
\newcommand{\trace}{\mathrm{tr}}
\newcommand{\pidentity}{\pname{Identity}}
\newcommand{\pteleport}{\mathit{Teleport}}
\newcommand{\palice}{\pname{Alice}}
\newcommand{\pbob}{\pname{Bob}}
\newcommand{\pbobrec}{\pname{BobRec}}
\newcommand{\pbobcorr}{\pname{BobCorr}}
\newcommand{\pnoise}{\pname{Noise}}
\newcommand{\pnoisernd}{\pname{NoiseRnd}}
\newcommand{\pnoiseerr}{\pname{NoiseErr}}
\newcommand{\pQECC}{\pname{QECC}}
\newcommand{\prnd}{\pname{Rnd}}
\newcommand{\pflip}{\pname{Flip}}
\newcommand{\pbitflip}{\pname{BitFlip}}
\def\@fnsymbol#1{\ifcase#1\or *\or **\or \ddagger\or \mathchar "278\or \mathchar "27B\or \|\or **\or \dagger\dagger \or \ddagger\ddagger \else\@ctrerr\fi\relax}
\title{Analysis of a Quantum Error Correcting Code using Quantum
  Process Calculus}
\author{Timothy A. S. Davidson
\institute{Department of Computer Science\\
University of Warwick, UK}
\and
Simon J. Gay
\institute{School of Computing Science\\
University of Glasgow, UK}
\and 
Rajagopal Nagarajan\thanks{Partially supported by ``Process Algebra
Approach to Distributed Quantum Computation and Secure Quantum
Communication'', Australian Research Council Discovery Project
DP110103473.}
\institute{Department of Computer Science\\
University of Warwick, UK}
\and
Ittoop Vergheese Puthoor\thanks{Supported by a Lord Kelvin / Adam
  Smith Scholarship from the University of Glasgow.} 
\institute{School of Computing Science and\\
School of Physics and Astronomy\\
University of Glasgow, UK}
}
\begin{document}
\maketitle

\begin{abstract}
We describe the use of quantum process calculus to describe and
analyze quantum communication protocols, following the successful
field of \emph{formal methods} from classical computer science. The
key idea is to define two systems, one modelling a protocol and one
expressing a specification, and prove that they are
\emph{behaviourally equivalent}. We summarize the necessary theory in
the process calculus CQP, including the crucial result that
equivalence is a \emph{congruence}, meaning that it is preserved by
embedding in any context. We illustrate the approach by
analyzing two versions of a quantum error correction system.
\end{abstract}

\section{Introduction}
\label{sec-intro}
Quantum process calculus is a generic term for a class of formal
languages with which to describe and analyze the behaviour of systems
that combine quantum and classical computation and communication.
Quantum process calculi have been developed as part of a programme to
transfer ideas from the field of \emph{formal methods}, well
established within classical computer science, to quantum systems. The
field of formal methods provides theories, methodologies and tools for
verifying the correctness of computing systems, usually systems that
involve concurrent, communicating components. The motivation for
developing quantum formal methods is partly to provide a conceptual
understanding of concurrent, communicating quantum systems, and partly
to support the future development of tools for verifying the
correctness of practical quantum technologies such as cryptosystems.

Our own approach is based on a particular quantum process calculus
called Communicating Quantum Processes (CQP), developed by Gay and
Nagarajan \cite{Gay2005}. Recent work on CQP has addressed the
question of defining \emph{behavioural equivalence} between processes,
which formalizes the idea of observational indistinguishability. The
aim is to support the following methodology for proving correctness of
a system. First, define $\pname{System}$, a process that models the
system of interest. Second, define $\pname{Specification}$, a simpler
process that directly expresses the desired behaviour of
$\pname{System}$. Third, prove that these two processes are
equivalent, meaning indistinguishable by any observer: $\pname{System}
\cong \pname{Specification}$. This approach works best when the notion
of equivalence is a \emph{congruence}, meaning that it is preserved by
inclusion in any environment. While there have been several attempts
to define a congruence for a quantum process calculus, the problem has
only recently been solved: for CQP, reported in Davidson's PhD thesis
\cite{DavidsonThesis}, and independently for qCCS by Feng \emph{et
  al.} \cite{Feng2011}.

The present paper begins in Section~\ref{sec:CQP} by reviewing the
language of CQP and illustrating it with a model of a quantum error
correcting code. Section~\ref{sec:equivalence} then summarizes the
theory of behavioural equivalence for CQP, which has not previously
been published other than in Davidson's thesis, and applies it to the
error correcting code. Section~\ref{sec:example} analyzes the system
in the presence of errors that cannot be corrected. Finally,
Section~\ref{sec:conclusion} concludes with an indication of
directions for future work. The contributions of the paper are the
first publication of the definition of a congruence for CQP, and the
application of CQP congruence to examples beyond the teleportation and
superdense coding systems that have been considered previously.

\paragraph*{Related Work}
Lalire \cite{Lalire2006} defined a probabilistic branching
bisimilarity for the process calculus QPAlg, based on the branching
bisimilarity of van Glabbeek and Weijland \cite{Glabbeek1996}, but it
was not preserved by parallel composition.
Feng et al.\ \cite{Feng2006} developed qCCS and defined strong and
weak probabilistic bisimilarity. Their equivalences are preserved by
parallel composition with processes that do not change the quantum
context. A later version of qCCS \cite{Ying2009} excluded classical
information and introduced the notion of approximate bisimilarity as a
way of quantifying differences in purely quantum process behaviour.
Their strong reduction-bisimilarity is a congruence is not sufficient
for the analysis of most interesting quantum protocols, as the
language does not include a full treatment of measurement. In recent
work, Feng et al.\ \cite{Feng2011} define a new version of qCCS and
prove that weak bisimilarity is a congruence. They apply their result
to quantum teleportation and superdense coding. The details of their
equivalence relation are different from our full probabilistic
branching bisimilarity, and a thorough comparison awaits further work.

The work presented in this paper contrasts with previous work on
model-checking quantum systems. The QMC (Quantum Model-Checker) system
\cite{Gay2008,Gay2010a} is able to verify that a quantum protocol
satisfies a specification expressed in a quantum logic, by
exhaustively simulating every branch of its behaviour. The use of
logical formulae is known as \emph{property-oriented} specification,
in distinction to the \emph{process-oriented} specifications
considered in the present paper. Because of the need for efficient
simulation, QMC uses the stabilizer formalism \cite{aaronson} and is
limited to Clifford group operations. Nevertheless, this is sufficient
for the analysis of a simple error correcting code, and such an
analysis appears in \cite{Gay2010a}. There are two main advantages of
the process calculus approach. First, because we are using
pen-and-paper reasoning rather than computational simulation, there is
no restriction to stabilizer states. Second, the fact that equivalence
is a congruence means that we can use equational reasoning to deduce
further equivalences, whereas in the model-checking approach we only
obtain the particular fact that is checked. The disadvantage of the
process calculus approach is that, unlike the situation for classical
process calculus, equivalence-checking has not yet been automated.

\section{Communicating Quantum Processes (CQP)}
\label{sec:CQP}
\label{sec-CQP}
CQP \cite{Gay2005} is a process calculus for formally defining the
structure and behaviour of systems that combine quantum and classical
communication and computation. It is based on pi-calculus
\cite{Milner1999,Milner1992}, with the addition of primitive
operations for quantum information processing. The general picture is
that a system consists of a number of independent components, or
\emph{processes}, which can communicate by sending data along
\emph{channels}. In particular, qubits can be transmitted on channels.
One of the distinctive features of CQP is its type system, which
ensures that operations can only be applied to data of the appropriate
type. The type system is also used to enforce the view of qubits as
physical resources, each of which has a unique owning process at any
given time. If a qubit is send from $A$ to $B$, then ownership is
transferred and $A$ can no longer access it. Although typing is
important, we will not discuss it in detail in the present paper;
however, our CQP definitions will include type information because it
usually forms useful documentation. Also, in the present paper, we
will not give a full formal definition of the CQP language. Instead,
in the next section, we will explain it informally in relation to our
first model of a quantum error correction system.

\subsection{Error Correction: A First Model}
Our model of a quantum error correction system consists of three
processes: $\palice$, $\pbob$ and $\pnoise$. $\palice$ wants to send a
qubit to $\pbob$ over a noisy channel, represented by $\pnoise$. She
uses a simple error correcting code based on threefold repetition
\cite [Chapter 10]{Nielsen2000}. This code is able to correct a single
bit-flip error in each block of three transmitted qubits, so for the
purpose of this example, in each block of three qubits, $\pnoise$
either applies $\gX$ to one of them or does nothing. $\pbob$ uses the
appropriate decoding procedure to recover $\palice$'s original qubit.
The CQP definition of $\palice$ is as follows.
\[
\begin{array}{rcl}
  \multicolumn{3}{l}{\pname{Alice}(\typed{a}{\chant{\Qbit}},\typed{b}{\chant{\Qbit,\Qbit,\Qbit}})
  =}\\
& & (\qbit y,z)
  \inp{a}{\typed{x}{\Qbit}}\sep\action{x,z\trans\qgate{CNot}}\sep\action{x,y\trans\qgate{CNot}}\sep\outp{b}{x,y,z}\sep\nil
\end{array}
\]
$\palice$ is parameterized by two channels, $a$ and $b$. In order to
give $\palice$ a general definition independent of the qubit to be
sent to $\pbob$, she will receive the qubit on channel $a$. The type
of $a$ is $\chant{\Qbit}$, which is the type of a channel on which
each message is a qubit. Channel $b$ is where $\palice$ sends the
encoded qubits. Each message on $b$ consists of three qubits, as
indicated by the type $\chant{\Qbit,\Qbit,\Qbit}$.

The right hand side of the definition specifies $\palice$'s
behaviour. The first term, $(\qbit y,z)$, allocates two fresh qubits,
each in state $\ket{0}$, and gives them the local names $y$ and
$z$. Then follows a sequence of terms separated by dots. This
indicates temporal sequencing, from left to
right. $\inp{a}{\typed{x}{\Qbit}}$ specifies that a qubit is received
from channel $a$ and given the local name $x$. The term
$\action{x,z\trans\qgate{CNot}}$ specifies that the $\qgate{CNot}$
operation is applied to qubits $x$ and $z$; the next term is
similar. These operations implement the threefold repetition code: if
the intial state of $x$ is $\ket{0}$ (respectively, $\ket{1}$) then
the state of $x,y,z$ becomes $\ket{000}$ (respectively,
$\ket{111}$). In general, of course, the initial state of $x$ may be a
superposition, and then so will be the final state of
$x,y,z$. Finally, the term $\outp{b}{x,y,z}$ means that the qubits
$x,y,z$ are sent as a message on channel $b$. The term $\nil$ simply
indicates termination.



We model a noisy quantum channel by the process $\pnoise$, which
receives three qubits from channel $b$ (connected to $\palice$) and
sends three (possibly corrupted) qubits on channel $c$ (connected to
$\pbob$). $\pnoise$ has four possible actions: do nothing, or apply
$\gX$ to one of the three qubits. These actions are chosen with equal
probability. We produce probabilistic behaviour by introducing fresh
qubits in state $\ket{0}$, applying $\gH$ to put them into state
$\frac{1}{\sqrt{2}}(\ket{0}+\ket{1})$, and then measuring in the
standard basis. The definition of $\pnoise$ is split into two
sub-processes, of which the first, $\pnoisernd$, produces two
random classical bits and sends them to the second, $\pnoiseerr$, on
channel $p$. This programming style, using internal messages instead
of assignment to variables, is typical of pi-calculus. 
\[
\begin{array}{rcl}
\pnoisernd(\typed{p}{\chant{\bit,\bit}})
& = & (\qbit
u,v)\action{u\trans\gH}\sep\action{v\trans\gH}\sep\outp{p}{\measure
  u,\measure v}\sep\nil
\end{array}
\]

The process $\pnoiseerr$ receives three qubits from channel $b$, and
two classical bits from channel $p$. It interprets the classical bits,
locally named $j$ and $k$, as instructions for corrupting the
qubits. This uses appropriate Boolean combinations of $j$
and $k$ to construct conditional quantum operations such as
$\gX^{j\overline{k}}$. 
\[
\begin{array}{rcl}
\multicolumn{3}{l}{\pnoiseerr(\typed{b}{\chant{\Qbit,\Qbit,\Qbit}},\typed{p}{\chant{\bit,\bit}},\typed{c}{\chant{\Qbit,\Qbit,\Qbit}})
  =}\\
& & \inp{b}{\typed{x}{\Qbit},\typed{y}{\Qbit},\typed{z}{\Qbit}}\sep\inp{p}{\typed{j}{\bit},\typed{k}{\bit}}\sep\action{x\trans\gX^{jk}}\sep\action{y\trans\gX^{j\overline{k}}}\sep\action{z\trans\gX^{\overline{j}k}}\sep\outp{c}{x,y,z}\sep\nil 
\end{array}
\]

The complete $\pnoise$ process consists of $\pnoisernd$ and
$\pnoiseerr$ in parallel, indicated by the vertical bar. Channel $p$
is designated as a private local channel; this is specified by 
$(\new p)$. This construct comes from pi-calculus, where it can be used to
dynamically create fresh channels, but here we are using it in the
style of older process calculi such as CCS, to indicate a channel with
restricted scope. Putting $\pnoisernd$ and
$\pnoiseerr$ in parallel means that the output on $p$ in $\pnoisernd$
synchronizes with the input on $p$ in $\pnoiseerr$, so that data is
transferred. 
\[
\begin{array}{rcl}
\pnoise(\typed{b}{\chant{\Qbit,\Qbit,\Qbit}},\typed{c}{\chant{\Qbit,\Qbit,\Qbit}})
& = & (\new p)(\pnoisernd(p) \parallel \pnoiseerr(b,p,c))
\end{array}
\]


$\pbob$ consists of $\pbobrec$ and $\pbobcorr$, where $\pbobrec$
receives the qubits and measures the error syndrome, and $\pbobcorr$
applies the appropriate correction. An internal channel $p$ is used to
transmit the result of the measurement, as well as the original
qubits, again in pi-calculus style. After correcting the error in the
group of three qubits, $\pbobcorr$ reconstructs a quantum state in
which qubit $x$ has the original state received by $\palice$ and is
separable from the auxiliary qubits. Finally, $\pbobcorr$ outputs $x$
on channel $d$.
\[
\begin{array}{rcl}
\multicolumn{3}{l}{\pbobrec({\typed{c}{\chant{\Qbit,\Qbit,\Qbit}},\typed{p}{\chant{\Qbit,\Qbit,\Qbit,\bit,\bit}}})
  = (\qbit s,t)\inp{c}{\typed{x}{\Qbit},\typed{y}{\Qbit},\typed{z}{\Qbit}}\sep}\\
& & \action
{x,s \trans \qgate{CNot}}\sep\action{y,s \trans
  \qgate{CNot}}\sep\action{x,t\trans\qgate{CNot}}\sep\action{z,t
  \trans \qgate{CNot}}\sep\outp{p}{x,y,z,\measure s,\measure
  t}\sep\nil
\end{array}
\]
\[
\begin{array}{rcl}
\multicolumn{3}{l}{\pbobcorr(\typed{p}{\chant{\Qbit,\Qbit,\Qbit,\bit,\bit}},\typed{d}{\chant{\Qbit}})
  = \inp{p}{\typed{x}{\Qbit},\typed{y}{\Qbit},\typed{z}{\Qbit},\typed{j}{\bit},\typed{k}{\bit}}\sep}\\
& &
\action{x\trans\gX^{jk}}\sep\action{y\trans\gX^{j\overline{k}}}\sep\action{z\trans\gX^{\overline{j}k}}\sep\action{x,y\trans\qgate{CNot}}\sep\action{x,z\trans\qgate{CNot}}\sep\outp{d}{x}\sep\nil
\end{array}
\]
\[
\begin{array}{rcl}
\multicolumn{3}{l}{\pname{Bob}(\typed{c}{\chant{\Qbit,\Qbit,\Qbit}},\typed{d}{\chant{\Qbit}})
  = (\new p)(\pbobrec(c,p) \parallel \pbobcorr(p,d))}
\end{array}
\]

The overall effect of the error correcting system is to input a qubit
from channel $a$ and output a qubit, in the same state, on channel
$d$, in the presence of noise. The complete system is defined as
follows.
\[
\pQECC(\typed{a}{\chant{\Qbit}},\typed{d}{\chant{\Qbit}})
  = (\new b,c)(\palice(a,b) \parallel
\pnoise(b,c) \parallel \pbob(c,d))
\]

When we consider correctness of the error correction system, we will
prove that $\pQECC$ is equivalent to the following \emph{identity
  process}, which by definition transmits a single qubit faithfully.
\[
\pidentity(\typed{a}{\chant{\Qbit}},\typed{d}{\chant{\Qbit}}) = \inp{a}{\typed{x}{\Qbit}}\sep\outp{d}{x}\sep\nil
\]

\subsection{Semantics of CQP}
The intended behaviour of the processes in the error correction system
was described informally in the previous section, but in fact the
behaviour is precisely specified by the formal semantics of CQP. In
this section we will explain the formal semantics, although without
giving all of the definitions. Full details can be found in Davidson's
PhD thesis \cite{DavidsonThesis}.


In classical process calculus, the semantics is defined by labelled
transitions between syntactic process terms. For example, a process of
the form $\outp{c}{2}\sep P$, where $P$ is some continuation process,
has the transition
\begin{equation}\label{eq:trans-out}
\ptrns{\outp{c}{2}\sep P}{\outp{c}{2}}{P}.
\end{equation}
The label $\outp{c}{2}$ indicates the potential interaction of the
process with the environment. In order for this potential interaction
to become an actual step in the behaviour of a system, there would
have to be another process, ready to receive on channel $c$. A
suitable process is $\inp{c}{x}\sep Q$, where $Q$ is some continuation
process. The labelled transition representing the potential input is
\begin{equation}\label{eq:trans-in}
\ptrns{\inp{c}{x}\sep Q}{\inp{c}{v}}{Q\subst{v}{x}}.
\end{equation}
Here $v$ stands for any possible input value, and $Q\subst{v}{x}$
means $Q$ with the value $v$ substituted for the variable $x$. If
these two processes are put in parallel then each has a partner for
its potential interaction, and the input and output can synchronize,
resulting in a $\tau$ transition which represents a single step of
behaviour:
\[
\ptrns{\outp{c}{2}\sep P \parallel \inp{c}{x}\sep
  Q}{\tau}{P \parallel Q\subst{2}{x}}.
\]
The complete definition of the semantics takes the form of a
collection of labelled transition rules. Transition
(\ref{eq:trans-out}) becomes a general rule for output if the value $2$ is
replaced by a general value $v$. Transition (\ref{eq:trans-in}) is a
general rule for input. The interaction between input and output is
defined by the rule
\[
\begin{prooftree}
\ptrns{P}{\outp{c}{v}}{P'} \qquad \ptrns{Q}{\inp{c}{v}}{Q'}
\justifies
\ptrns{P \parallel Q}{\tau}{P' \parallel Q'}
\end{prooftree}
\]
which specifies that if the transitions above the line (hypotheses)
are possible then so is the transition below the line (conclusion).
Full details of this style of semantics, in relation to pi-calculus,
can be found in \cite{Milner1999,Sangiorgi2001}.

To define the semantics of a quantum process calculus such as CQP, we
need to include a representation of the quantum state. Because of
entanglement, the quantum state is a global property. It also turns
out to be necessary to specify which qubits in the global quantum
state are owned by (i.e.\ accessible to) the process term under
consideration. We work with \emph{configurations} such as
\begin{equation}\label{eq:quantum-output}
\cnfig{\qstore{q,r}{\frac{1}{\sqrt{2}}(\ket{00}+\ket{11})}}{q}{\outp{c}{q}\sep P}.
\end{equation}
This configuration means that the global quantum state consists of two
qubits, $q$ and $r$, in the specified state; that the process term
under consideration has
access to qubit $q$ but not to qubit $r$ ; and that the process itself
is $\outp{c}{q}\sep P$. Now consider a configuration
with the same quantum state but a different process term:
\[
\cnfig{\qstore{q,r}{\frac{1}{\sqrt{2}}(\ket{00}+\ket{11})}}{r}{\outp{d}{r}\sep Q}.
\]
The parallel composition of these configurations is the following:
\[
\cnfig{\qstore{q,r}{\frac{1}{\sqrt{2}}(\ket{00}+\ket{11})}}{q,r}{\outp{c}{q}\sep P
\parallel \outp{d}{r}\sep Q}
\]
where the quantum state is still the same.

The semantics of CQP consists of labelled transitions between
configurations, which are defined in a similar way to classical
process calculus. For example, configuration (\ref{eq:quantum-output})
has the transition
\[
\ptrns{\cnfig{\qstore{q,r}{\frac{1}{\sqrt{2}}(\ket{00}+\ket{11})}}{q}{\outp{c}{q}\sep
  P}}{\outp{c}{q}}{\cnfig{\qstore{q,r}{\frac{1}{\sqrt{2}}(\ket{00}+\ket{11})}}{\emptyset}{P}}.
\]
The quantum state is not changed by this transition,
but because qubit $q$ is output, the continuation process $P$ no
longer has access to it; the final configuration has an empty list of
owned qubits. 

Previous papers on CQP \cite{Gay2005,Gay2006a} defined the semantics
in a different style. Instead of labelled transitions there were
\emph{reductions}, corresponding to $\tau$ transitions, and these were
defined directly. However, although reduction semantics allows the
behaviour of a complete system to be defined, labelled transitions and
their interpretation as \emph{potential} interactions are necessary in
order to define equivalence between processes, which is the focus of
the present paper.

As well as the different style of definition used in previous work,
there is a very significant difference in the way that the semantics
treats quantum measurement. In the original reduction semantics of
CQP, a measurement leads to a probability distribution over
configurations, which at the next step reduces probabilistically to
one particular configuration. But in order for equivalence of
processes to have the crucial property of \emph{congruence}, the
semantics must incorporate a more sophisticated analysis of
measurement, in which \emph{mixed configurations} play an essential
role.

If the result of a quantum measurement is not made available to an
observer then the system is considered to be in a mixed
state, but it is not sufficient to simply write a mixed quantum
state in a configuration. In general the mixture includes the
process term, because the measurement result occurs within the
term.

\begin{example}
  \label{ex:cfg_measurement}
  $
  ( \qstore{q}{\alpha_0\ket{0} + \alpha_1\ket{1}}; q; \outp{c}{\measure{q}}.P) \transition{\tau} \Dist{i \in \{0,1\}}{\ms{\alpha_i}} ( \qstore{q}{\ket{i}} ; q; \ltrm{x}{\outp{c}{x}.P}{i} )
  $.
\end{example}

This transition represents the effect of a measurement, within a
process which is going to output the result of the measurement; the
output, however, is not part of the transition, which is why it is a
$\tau$ transition and the process term on the right still contains
$\outp{c}{}$.  The configuration on the left is a \emph{pure
  configuration}, as described before.  On the right we have a
\emph{mixed configuration} in which the $\oplus$ ranges over the
possible outcomes of the measurement and the $\ms{\alpha_i}$ are the weights of
the components in the mixture. The quantum state
$\qstore{q}{\ket{i}}$ corresponds to the measurement outcome. The
expression $\ltrmshort{x}{\outp{c}{x}.P}$ is not a $\lambda$-calculus
function, but represents the fact that the components of the mixed
configuration have the same process structure and differ only in the
values corresponding to measurement outcomes. The final term in the
configuration, $i$, shows how the abstracted variable $x$ should be
instantiated in each component. Thus the $\lambda x$ represents a term
into which expressions may be substituted, which is the reason for the
$\lambda$ notation. So the mixed configuration
is essentially an abbreviation of
\[
\ms{\alpha_0} ( \qstore{q}{\ket{0}} ; q; \outp{c}{0}.P\subst{0}{x}) \oplus \ms{\alpha_1} ( \qstore{q}{\ket{1}}; q; \outp{c}{1}.P\subst{1}{x}).
\]

If a measurement outcome is output then it becomes apparent to an
observer which of the possible states the system is in. This is
represented by probabilistic branching, after which we consider that
system to be in one branch or the other --- it is no longer a mixture
of the two. Example~\ref{ex:cfg_output} shows the effect of the output
from the final configuration of Example~\ref{ex:cfg_measurement}. The
output transition produces the intermediate configuration, which is a
probability distribution over pure configurations (in contrast to a
mixed configuration; note the change from $\oplus$ to
$\boxplus$). Because it comes from a mixed configuration, the output
transition contains a \emph{set} of possible values.
From the intermediate configuration there are two possible
probabilistic transitions, of which one is shown ($\ptrans{\ms{\alpha_0}}$). 

\begin{example}\label{ex:cfg_output}
\[
\begin{array}{l}
    \Dist{i \in \{0,1\}}{\ms{\alpha}_i} ( \qstore{q}{\ket{i}} ; q;
    \ltrm{x}{\outp{c}{x}.P}{i} ) \transition{\outp{c}{\{0,1\}}} \\
 \hspace{20ex}\Prob{i \in \{0,1\}}{\ms{\alpha_i}} ( \qstore{q}{\ket{i}} ; q; \ltrm{x}{P}{i} ) 
\ptrans{\ms{\alpha_0}}  ( \qstore{q}{\ket{0}} ; q; \ltrm{x}{P}{0} ) 
\end{array}
\]  
\end{example}

Measurement outcomes may be communicated between processes without
creating a probability distribution. In these cases an observer must
still consider the system to be in a mixed configuration. In
Example~\ref{ex:cfg_communication} there is a mixed configuration on
the left, with arbitrary weights $g_i$, which we imagine to have been
produced by a measurement.  However, there is now a receiver for the
output. Although there is no difference in process $Q$ between the two
components of the mixed configuration, we include it in the $\lambda$
because the communication will propagate the different possible values
for $x$ to $Q$.
\begin{example}
  \label{ex:cfg_communication}
\[
    \Dist{i \in \{0,1\}}{g_i} ( \qstore{q}{\ket{i}} ; q; \ltrm{x}{(\outp{c}{x}.P \parcomp \inp{c}{y}.Q)}{i} ) 
    \transition{\tau} \Dist{i \in \{0,1\}}{g_i} (
    \qstore{q}{\ket{i}} ; q; \ltrm{x}{(P \parcomp Q\subst{x}{y})}{i} )
\]
\end{example}

The full definition of the labelled transition semantics covers more
complex possibilities. For example, if incomplete information about a
measurement is revealed, the resulting configuration is in general a
probability distribution over mixed configurations. The aspects of the
semantics that are relevant to the present paper will be illustrated
further in relation to the error correction example. Now we
define some notation.

 There are two types of transition: probabilistic transitions which
 take the form $\Prob{i}{p_i} s_i \ptrans{p_i} s_i$ where $\forall
 i.(p_i < 1)$, and non-deterministic transitions which have the general
 form $s \transition{\alpha} \Prob{i}{p_i} s_i$ where $\forall i.(p_i
 \leq 1)$ and $\alpha$ is an \emph{action}. The notation $\Prob{i}{p_i}
 s_i \equiv \cfgprob{p_1} s_1 \cfgprobsum \cdots \cfgplus{p_n} s_n$
 denotes a probability distribution over configurations in which
 $\sum_i p_i = 1$. If there is only a single configuration (with
 probability 1) we omit the probability, for example $s
 \transition{\alpha} s'$.

 The separation of probabilistic and non-deterministic transitions
 avoids the need to consider non-deterministic and probabilistic
 transitions from the same configuration. The relations
 $\transition{\alpha}$ and $\ptrans{\pi}$ induce a partition of the
 set 
 \states\ of configurations into non-deterministic configurations \nstates\ and
 probabilistic configurations \pstates: let $\pstates = \{s \in \states
 ~|~ \exists \pi \in (0,1], \exists t \in \states, s \ptrans{\pi} t\}$;
 and let $\nstates = \states \setminus \pstates$. By this definition a
 configuration with no transitions belongs to \nstates. This notation
 will be used in Section~\ref{sec:equivalence}.




\subsection{Execution of  QECC}
We show the interesting steps in one possible execution of QECC,
omitting the $\new$ declarations from the process terms to reduce
clutter. The semantics of CQP is non-deterministic, so transitions can
proceed in a different order; the order shown here is chosen for
presentational convenience. The initial configuration is
$\cnfig{\emptyset}{\emptyset}{\palice\parallel\pnoise\parallel\pbob}$.
In the first few steps, the processes execute $\mathsf{qbit}$ terms,
constructing a global quantum state:
\[
\cnfig{\qstore{y,z,u,v,s,t}{\ket{000000}}}{y,z,u,v,s,t}{\palice'\parallel\pnoise'\parallel\pbob'}
\]
$\palice$ receives qubit $x$, in state $\alpha\ket{0}+\beta\ket{1}$,
from the environment, via transition $\ptrns{}{\inp{a}{x}}{}$, which
expands the quantum state. We now abbreviate the list of qubits to
$\vec{q}=x,y,z,u,v,s,t$. After some $\tau$ transitions corresponding to $\palice$'s
$\qgate{CNot}$ operations, we have:
\[
\cnfig{\qstore{\vec{q}}{\alpha\ket{0000000}+\beta\ket{1110000}}}{\vec{q}}{\outp{b}{x,y,z}\sep\nil\parallel\pnoise'\parallel\pbob'}
\]
$\pnoise' = \pnoiseerr \parallel \pnoisernd'$ ($\pnoisernd'$ has
already done its $\mathsf{qbit}$). The output on $b$ interacts with
the input on $b$ in $\pnoiseerr$. Meanwhile, the measurements in
$\pnoisernd$ produce a mixed configuration because the results are
communicated internally, to $\pnoiseerr$:
\[
\begin{array}{lr}
\multicolumn{2}{l}{\oplus_{j,k\in\{0,1\}}\frac{1}{4}(\qstore{\vec{q}}{\alpha\ket{000jk00}+\beta\ket{111jk00}};\vec{q};}\\
\multicolumn{2}{r}{\hspace{55mm}\ltrm{jk}{\action{x\trans\gX^{jk}}\sep\action{y\trans\gX^{j\overline{k}}}\sep\action{z\trans\gX^{\overline{j}k}}\sep\outp{c}{x,y,z}\sep\nil\parallel\pbob'}{j,k})}
\end{array}
\]
After $\tau$ transitions from the controlled $\gX$ operations, we can
write the mixed configuration explicitly:
\[
\begin{array}{l}
\phantom{\oplus}\frac{1}{4}\cnfig{\qstore{\vec{q}}{\alpha\ket{0000000}+\beta\ket{1110000}}}{\vec{q}}{\outp{c}{x,y,z}\sep\nil\parallel\pbob'}\\
\oplus
\frac{1}{4}\cnfig{\qstore{\vec{q}}{\alpha\ket{0010100}+\beta\ket{1100100}}}{\vec{q}}{\outp{c}{x,y,z}\sep\nil\parallel\pbob'}\\
\oplus
\frac{1}{4}\cnfig{\qstore{\vec{q}}{\alpha\ket{0101000}+\beta\ket{1011000}}}{\vec{q}}{\outp{c}{x,y,z}\sep\nil\parallel\pbob'}\\
\oplus
\frac{1}{4}\cnfig{\qstore{\vec{q}}{\alpha\ket{1001100}+\beta\ket{0111100}}}{\vec{q}}{\outp{c}{x,y,z}\sep\nil\parallel\pbob'}
\end{array}
\]
The remaining transitions operate within the mixed configuration. In
each component of the mixture, the measurement of $s,t$ by $\pbobrec$
has a deterministic outcome, so no further mixedness is
introduced. Eventually we have a mixed configuration in which the
process term is the same, $\outp{d}{x}\sep\nil$, in every component,
so we can just consider the mixed \emph{state}, which is
\[
\oplus_{j,k\in\{0,1\}}\frac{1}{4}\qstore{x,y,z,u,v,s,t}{\alpha\ket{000jkjk}+\beta\ket{100jkjk}}.
\]
The mixture over $j,k$ is the residue of the random choice made by
$\pnoisernd$, and the dependence of $s$ and $t$ on $j,k$ is because
$\pbobrec$'s measurement recovers the values of $j$ and $k$ (which is
what allows the error to be corrected). In this final mixed state, the
reduced density matrix of $x$, which is what we are interested in when
$x$ is output, is the same as the original density matrix of $x$.

\section{Behavioural Equivalence of CQP Processes}
\label{sec-equivalence}
\label{sec:equivalence}

The process calculus approach to verification is to define a process
$\pname{System}$ which models the system of interest, another process
$\pname{Spec}$ which expresses the specification that $\pname{System}$
should satisfy, and then prove that $\pname{System}$ and
$\pname{Spec}$ are equivalent. Usually $\pname{Spec}$ is defined in a
sufficiently simple way that it can be taken as self-evident that it
accurately represents the desired specification.

What do we mean by \emph{equivalent}? The idea is that two processes
are equivalent if their behaviour is indistinguishable by an observer.
That is, if they do the same thing in the same circumstances.
Equivalence relations in this style are generically called
\emph{behavioural} equivalences. Suppose that $\cong$ is an
equivalence relation on processes. The ideal situation is for $\cong$
to have a further property called \emph{congruence}, which means that
it is preserved by all of the constructs of the process calculus. A
convenient way to express this property involves the notion of a
\emph{process context} $\ctxt{C}{}$. This is a process term containing a
\emph{hole}, represented by $[]$, into which a process term may be
placed. For example, $\inp{c}{x}\sep[]$ is a context, and putting the
process $\outp{d}{x}\sep{\nil}$ into the hole results in the process
$\inp{c}{x}\sep\outp{d}{x}\sep{\nil}$.

\begin{definition}
An equivalence relation $\cong$ on processes is a \emph{congruence} if
\[
\forall P, Q.~ P\cong Q \Rightarrow \forall \ctxt{C}{}.~
\ctxt{C}{P}\cong\ctxt{C}{Q}.
\] 
\end{definition}

This definition of congruence corresponds to the idea that observers
are themselves expressed as processes. Congruence, in addition to the
property of being an equivalence relation, is what is required in
order to allow equational reasoning about equivalence of processes. It
means that if a system satisfies its specification, then it continues
to satisfy its specification no matter what environment it is placed
in.

From the beginning of the study of quantum process calculus, the aim
was to define a behavioural equivalence with the congruence property.
This was not straightforward and took several years to achieve; Lalire
\cite{Lalire2006} describes an unsuccessful attempt. Recently the
congruence problem has been solved by the first three authors of the
present paper \cite{DavidsonThesis} for CQP and, independently, by
Feng \emph{et al.} \cite{Feng2011} for qCCS.

We will now present the concept of \emph{bisimilarity}, which is the
main approach to behavioural equivalence, and then define a particular
form of bisimilarity, called \emph{probabilistic branching
  bisimilarity}, which is a congruence for CQP.  


\subsection{Strong Bisimilarity}
The basic idea of bisimilarity is that if two processes are equivalent
then any labelled transition by one can be matched by the other, and
the resulting processes are again equivalent. It is worth presenting
the definition of the prototypical example, \emph{strong bisimilarity}
\cite{Milner1989}, as a model for later definitions. The most general setting
for the definition is to consider a \emph{labelled transition system},
which consists of a set of states and a three-place relation on
$\mathit{States}\times\mathit{Labels}\times\mathit{States}$, written
$\ptrns{s}{\alpha}{t}$. A labelled transition system can be regarded as a
labelled directed graph whose vertices are the states. We will
consider relations on the set of states. The definition of strong
bisimilarity proceeds in two stages. First we define the property of
\emph{strong bisimulation}, which a particular relation might or might
not have.
\begin{definition}[Strong Bisimulation]
  A relation $\mathcal{R}$ is a \emph{strong bisimulation} if whenever
  $(P,Q) \in \mathcal{R}$ then for all labels $\alpha$, both
  \begin{enumerate}
    \item if $P \transition{\alpha} P'$ then $Q \transition{\alpha}
      Q'$ and $(P',Q') \in \mathcal{R}$, and
    \item if $Q \transition{\alpha} Q'$ then $P \transition{\alpha} P'$ and $(P',Q') \in \mathcal{R}$.
   \end{enumerate}
\end{definition}
For a given labelled transition system there are many relations that
have the property of strong bisimulation, including (trivially) the
empty relation. The key idea is to define
\emph{strong bisimilarity} to be the union of all strong
bisimulations, or equivalently the largest strong bisimulation.
In other words, $P$ and $Q$ are strong bisimilar (denoted $P \sim Q$) if
and only if there exists a bisimulation $\mathcal{R}$ such that $(P,Q)
\in \mathcal{R}$.

\subsection{Probabilistic Branching Bisimilarity}
One of the characteristics of strong bisimilarity is that it is a
stronger relation than trace equivalence; it is possible for two
processes to generate the same sequences of labels, but not be strong
bisimilar. Strong bisimilarity depends on the branching structure of
the processes as well as on their sequences of labels. Another
characteristic is that \emph{every} transition must be matched
exactly, including $\tau$ transitions. However, because they arise
from internal communications, it is often undesirable to insist that
equivalent processes must match each other's $\tau$ transitions. Hence
weaker variations of bisimilarity have been defined, including
\emph{weak bisimilarity} \cite{Milner1989}, which ignores $\tau$
transitions, and \emph{branching bisimilarity} \cite{Glabbeek1996},
which reduces the significance of $\tau$ transitions but retains
information about their branching structure.

When considering equivalences for quantum process calculus, it is
necessary to take probability into account; even with the treatment of
mixed configurations described in Section~\ref{sec:CQP}, there is
probabilistic behaviour when measurement results are revealed to the
observer. There are several varieties of probabilistic bisimilarity
for classical probabilistic process calculi, including
\emph{probabilistic branching bisimilarity} \cite{Trcka2008}. The equivalence
for CQP defined by Davidson \cite{DavidsonThesis}, which turns out to be a
congruence, is a form of probabilistic branching bisimilarity, adapted
to the situation in which probabilistic behaviour comes from quantum
measurement. A key point is that when considering matching of input or
output transitions involving qubits, it is the reduced density
matrices of the transmitted qubits that are required to be equal.

Although we did not present the full definition of the labelled
transition semantics for CQP, we will now define probabilistic
branching bisimilarity in full. In Section~\ref{sec:errorcorrectionequivalence}, the definition will
be applied to the error correction example. The 
definitions in the remainder of this section are from Davidson's
thesis \cite{DavidsonThesis}.


\textbf{Notation:} Let $\opttrans{\tau}$ denote zero
or one $\tau$ transitions; let $\weaktrans{ }$ denote zero or more $\tau$
transitions; and let $\weaktrans{\alpha}$ be equivalent to $\weaktrans{ }
\transition{\alpha} \weaktrans{ }$. We write $\vec{q}$ for a list of
qubit names, and similarly for other lists.


\begin{definition}[Density Matrix of Configurations]
  \label{def:density_matrix_configs}
  Let $\sigma_i = \qstore{\vec{p}}{\ket{\psi_i}}$ and $\vec{q}
  \subseteq \vec{p}$ and $s_i = (\sigma_i; \omega;
  \ltrm{\vec{x}}{P}{\vec{v}_i})$ and $s = \Dist{i}{g_i} s_i$. Then
\[
\begin{array}{llcll}
1. & \rho(\sigma_i) = \ketbra{\psi_i}{\psi_i} & ~~~~~~ & 4. &
\rho^{\vec{q}}(s_i) = \rho^{\vec{q}}(\sigma_i) \\
2. & \rho^{\vec{q}}(\sigma_i) =
\trace_{\vec{p}\setminus\vec{q}}(\ketbra{\psi_i}{\psi_i}) & & 5. &
\rho(s) = \sum_i g_i \rho(s_i) \\
3. & \rho(s_i) = \rho(\sigma_i) & & 6. & \rho^{\vec{q}}(s) = \sum_i g_i \rho^{\vec{q}}(s_i)
\end{array}
\]
\end{definition}

We also introduce the notation $\rhoe$ to denote the reduced density
matrix of the \emph{environment} qubits. Formally, if $s =
(\qstore{\vec{q}}{\ket{\psi}}; \vec{p}; P)$ then $\rhoe(s) =
\rho^{\vec{r}}(s)$ where $\vec{r} = \vec{q} \setminus \vec{p}$. The
definition of $\rhoe$ is extended to mixed configurations in the same
manner as $\rho$.

Again let $\states$ be the set of configurations. 
The probabilistic function $\mu: \states \times \states \rightarrow
[0,1]$ is defined in the style of \cite{Trcka2008}. It allows
non-deterministic transitions to be treated as transitions with
probability $1$, which is necessary when calculating the total
probability of reaching a terminal state.
$\mu(s,t) = \pi$ if $s \ptrans{\pi} t$; $\mu(s,t) = 1$ if $s = t$ and
$s \in \nstates$; $\mu(s,t) = 0$ otherwise.


\begin{definition}[Probabilistic Branching Bisimulation]
  \label{def:pbb}
  An equivalence relation $\mathcal{R}$
  on configurations is a \emph{probabilistic branching bisimulation} on configurations if whenever $(s,t) \in \mathcal{R}$ the following conditions are satisfied.
  \begin{enumerate}[I.]
	    \item If $s \in \nstates$ and $s \transition{\tau} s'$ 
	      then $\exists t', t''$ such that $t \weaktrans{ } t'
              \opttrans{\tau} t''$ with $(s, t') \in \mathcal{R}$ and $(s', t'') \in \mathcal{R}$.
	    \item If $s \transition{\outp{c}{V,\vec{q}_1}} s'$ where $s' = \Prob{j \in \{1\dots m\}}{p_j} s_j'$ and $V = \{\vec{v}_1,\dots,\vec{v}_m\}$ then $\exists t', t''$ such that $t \weaktrans{ } t' \transition{\outp{c}{V,\vec{q}_2}} t''$ with
	      \begin{enumerate}[a)]
		\item $(s, t') \in \mathcal{R}$,
		\item $t'' = \Prob{j \in \{1\dots m\}}{p_j} t_j''$,
                \item for each $j \in \{1,\dots,m\}$, $\rhoe(s_j') = \rhoe(t_j'')$.
		\item for each $j \in \{1,\dots,m\}$, $(s_j', t_j'') \in \mathcal{R}$.
	      \end{enumerate}
        \item If $s \transition{\inp{c}{\vec{v}}} s'$ then $\exists
          t', t''$ such that $t \weaktrans{ } t'
          \transition{\inp{c}{\vec{v}}} t''$ with $(s, t') \in
          \mathcal{R}$ and $(s', t'') \in \mathcal{R}$.
	    \item If $s \in \pstates$ then $\mu(s, D) = \mu(t, D)$ for all classes $D \in \states/\mathcal{R}$.
    \end{enumerate}
\end{definition}

This relation follows the standard definition of branching
bisimulation \cite{Glabbeek1996} with additional conditions for
probabilistic configurations and matching quantum information. In
condition II we require that the distinct set of values $V$ must match
and although the qubit names ($\vec{q}_1$ and $\vec{q}_2$) need not be
identical, their respective reduced density matrices
($\rho^{\vec{q}_1}(s)$ and $\rho^{\vec{q}_2}(t')$) must.

Condition IV provides the matching on probabilistic configurations
following the approach of \cite{Trcka2008}. In this relation, a
probabilistic configuration which necessarily evolves from an output
will satisfy IV if the prior configuration satisfies II d). It is
necessary to include the latter condition to ensure that the
probabilities are paired with their respective configurations.

Naturally this leads to the following definition of bisimilarity on
configurations.
\begin{definition}[Probabilistic Branching Bisimilarity]
Configurations $s$ and $t$ are \emph{probabilistic branching bisimilar}, denoted $s \pbsim t$, if there exists a probabilistic branching bisimulation $\mathcal{R}$ such that $(s,t) \in \mathcal{R}$.
\end{definition}

What we really want is equivalence of processes, independently of
configurations (i.e.\ independently of particular quantum states).

\begin{definition}[Probabilistic Branching Bisimilarity of Processes]
Processes $P$ and $Q$ are \emph{probabilistic branching bisimilar}, denoted $P \pbsim Q$, if and only if for all $\sigma$, $(\sigma; \emptyset; P) \pbsim (\sigma; \emptyset; Q)$.
\end{definition}

For convenience, in the remainder of this paper \emph{bisimilarity}
will refer to probabilistic branching bisimilarity and it will be
clear from the context whether this is the relation on processes or
configurations. The same symbol, $\pbsim$, is used for both relations.

It turns out that probabilistic branching bisimilarity is not a
congruence because it is not preserved by substitution of values for
variables, which is significant because of the use of substitution to
define the semantics of input. We therefore define a stronger
relation, \emph{full probabilistic branching bisimilarity}, which is
the closure of probabilistic branching bisimilarity under substitutions.

\begin{definition}[Full probabilistic branching bisimilarity]
  Processes $P$ and $Q$ are \emph{full probabilistic branching
    bisimilar}, denoted $P \fpbsim Q$, if for all substitutions $\kappa$
and all quantum states
  $\sigma$, $(\sigma; \vec{q}; P\kappa) \pbsim (\sigma; \vec{q};
  Q\kappa)$. 
\end{definition}

We are now able to state the main result of \cite{DavidsonThesis}.
\begin{theorem}[Full probabilistic branching bisimilarity is a congruence]
  \label{thm:congruence}
  If $P \fpbsim Q$ then for any context $\ctxt{C}{}$, if
  $\ctxt{C}{P}$ and $\ctxt{C}{Q}$ are typable then $\ctxt{C}{P}
  \fpbsim \ctxt{C}{Q}$. 
\end{theorem}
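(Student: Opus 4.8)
The plan is to prove the statement by induction on the structure of the context $\ctxt{C}{}$, reducing it to the fact that $\fpbsim$ is preserved by each individual constructor of CQP taken in isolation: the input and output prefixes, the quantum-operation and measurement prefixes, qubit allocation $(\qbit \vec y)$, parallel composition, channel restriction $\new c$, guarded summation, the conditional, and process invocation. The empty context is the base case and is immediate from the definition of $\fpbsim$. For the inductive step, write $\ctxt{C}{} = \ctxt{C_0}{\ctxt{C'}{}}$ with $\ctxt{C_0}{}$ a single constructor; the inductive hypothesis gives $\ctxt{C'}{P} \fpbsim \ctxt{C'}{Q}$, so it suffices to show that each $\ctxt{C_0}{}$ preserves $\fpbsim$. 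Since $\fpbsim$ is by definition closed under substitution, and substitutions push through the hole up to $\alpha$-renaming of bound names, for each $\ctxt{C_0}{}$ we may fix a substitution and a quantum state and work entirely with the relation $\pbsim$ on configurations, building, for the given data, a probabilistic branching bisimulation relating $\cnfig{\sigma}{\vec q}{\ctxt{C_0}{P}}$ and $\cnfig{\sigma}{\vec q}{\ctxt{C_0}{Q}}$.

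Most prefix cases are routine: a configuration $\cnfig{\sigma}{\vec q}{\ctxt{C_0}{P}}$ with $\ctxt{C_0}{} = \alpha.[]$ has a unique initial transition exposing the body, after which one appeals directly to $P \pbsim Q$; qubit allocation extends $\sigma$ identically on both sides and then uses the hypothesis, which quantifies over all compatible $\sigma$; and the substitution closure packaged in $\fpbsim$ is exactly what is needed for the input prefix $\inp{c}{\typed{x}{T}}.[]$, where the received value must subsequently be substituted into $P$ resp.\ $Q$. Summation and the conditional are similar: each initial transition of the compound either resolves the choice into a branch containing the hole, where the hypothesis applies, or is internal to a branch not containing the hole, where it is matched identically. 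Process invocation is handled by unfolding definitions, using guardedness so that the hole is not unguarded before unfolding terminates, or more carefully by a bisimulation-up-to-unfolding argument.

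The substantial case is parallel composition: assuming $\cnfig{\sigma}{\vec q}{P} \pbsim \cnfig{\sigma}{\vec q}{Q}$, we must show $\cnfig{\sigma}{\vec q}{P \parallel R} \pbsim \cnfig{\sigma}{\vec q}{Q \parallel R}$ for every typable $R$. Let $\mathcal{R}$ be a probabilistic branching bisimulation witnessing the hypothesis; define the candidate relation $\mathcal{S}$ as the equivalence closure of the set of pairs obtained by placing $\mathcal{R}$-related configurations in parallel with a common $R$ under a common split of owned qubits, further closed under structural congruence $\scong$ and under the mixed/distributed configuration abbreviations (so that a common $R$ may be carried inside a $\lambda$ or distributed over a $\boxplus$). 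One then checks conditions I--IV of Definition~\ref{def:pbb} for $\mathcal{S}$ by case analysis on the transitions of the left configuration: (i) a transition originating in $R$ is matched identically; (ii) a transition originating in $P$ is matched using $\mathcal{R}$, re-embedding every configuration along the matching path in the context $-\parallel R$; (iii) a $\tau$ transition from a synchronisation between $P$ and $R$ is matched by letting $Q$ first perform the same visible input/output (condition~II or~III for $\mathcal{R}$) and then letting $R$ perform its complementary action — here the reduced-density-matrix requirement~II(c) on the communicated qubits is precisely what keeps the composite quantum states, and hence the $\rhoe$ values, matched, since $R$ is syntactically identical on the two sides; (iv) probabilistic transitions (condition~IV) are matched by noting that the $\mathcal{S}$-classes refine into $\mathcal{R}$-classes paired with a fixed $R$, so equality of $\mu$ for $\mathcal{R}$ lifts to $\mathcal{S}$. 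The restriction case $\new c$ is comparatively routine: transitions on the restricted channel are either suppressed or forced to be internal, so $\mathcal{R}$'s matching of $\tau$-transitions already suffices, and one checks that restriction commutes appropriately with the branching intermediate states.

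The main obstacle sits inside cases (ii)--(iii). Because the equivalence is a \emph{branching} bisimilarity, when $P$'s transition $s \transition{\tau} s'$ is matched by $t \weaktrans{} t' \opttrans{\tau} t''$, only the endpoints $t'$ and $t''$ are guaranteed to be $\mathcal{R}$-related to $s$ and $s'$; to re-embed the entire path in the context $-\parallel R$ we need \emph{every} intermediate configuration along $t \weaktrans{} t'$ to be $\mathcal{R}$-related to $s$. Establishing this stuttering property of probabilistic branching bisimilarity — equivalently, passing to the semi-branching formulation and proving the two coincide — will be the key lemma on which the congruence argument rests. The second, more quantum-specific difficulty is the bookkeeping for mixed and distributed configurations under parallel composition: a measurement inside $P$ produces a mixed configuration whose $\lambda$-abstracted term must be allowed to absorb $R$, and a revealed measurement produces a distribution over which $R$ is then distributed, so one must verify that $\mathcal{S}$ is genuinely closed under these operations and that conditions~II(b), II(c) and~IV continue to hold componentwise. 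Once these two points are settled, the case analysis for parallel composition and restriction goes through, the induction on contexts closes, and Theorem~\ref{thm:congruence} follows.
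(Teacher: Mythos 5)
The paper does not actually prove Theorem~\ref{thm:congruence}: it states the result and refers the reader to Davidson's thesis \cite{DavidsonThesis} for the proof, so there is no in-paper argument to compare yours against line by line. Judged on its own terms, your strategy --- structural induction on the context, reduction to preservation of $\fpbsim$ by each individual constructor, with parallel composition as the substantial case handled by closing a witnessing bisimulation under composition with a common $R$ --- is the standard shape for such congruence proofs and is surely close to what the thesis does. You also correctly identify the two places where the real work lies: the stuttering property needed to re-embed an entire matching path $t \weaktrans{} t' \opttrans{\tau} t''$ into the context $- \parallel R$, and the closure of the candidate relation under formation of mixed configurations (absorbing $R$ into the $\lambda$-abstraction, distributing it over $\boxplus$) while preserving conditions II(b), II(c) and IV.

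The gap is that those two points are exactly where a proof of this theorem can fail, and you leave both as unproved lemmas ("will be the key lemma on which the congruence argument rests", "once these two points are settled"). The paper itself stresses that obtaining a congruence took several years and that earlier attempts (e.g.\ Lalire's) broke precisely at preservation under parallel composition; the whole reason CQP's semantics was redesigned around mixed configurations is that the naive probabilistic semantics makes the parallel case false, as the $P \parallel R$ versus $Q \parallel R$ example in the paper shows. So the content of the theorem is concentrated in the steps you defer. Two further points deserve more care than your sketch gives them: first, condition II(c) constrains $\rhoe$, the reduced density matrix of the \emph{environment} qubits, not of the communicated qubits per se, and the argument that this suffices for $R$'s subsequent behaviour after receiving a qubit (because any operation $R$ can perform depends only on the state of the subsystem it owns) needs to be made explicit; second, the claim that one "may fix a substitution and work entirely with $\pbsim$" glosses over the fact that the inductive hypothesis must be the substitution-closed relation $\fpbsim$ throughout, since the input-prefix and communication cases introduce fresh substitutions into the body. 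As submitted, this is a correct and well-informed plan, but not a proof.
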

The condition that $\ctxt{C}{P}$ and $\ctxt{C}{Q}$ are typable is
used to ensure that the context does not manipulate qubits that are
owned by $P$ or $Q$.

\subsection{Mixed Configurations and Congruence}
A simple example will illustrate why the congruence result depends
crucially on the use of mixed configurations. Consider the processes
\[
\begin{array}{rcl}
P(\typed{a}{\chant{\Qbit}}) & = &
\inp{a}{\typed{x}{\Qbit}}\sep\action{\measure x}\sep\nil \\
Q(\typed{a}{\chant{\Qbit}}) & = & \inp{a}{\typed{x}{\Qbit}}\sep\action{x\trans\gH}\action{\measure x}\sep\nil 
\end{array}
\]
$P$ and $Q$ are probabilistic branching bisimilar, because in any
quantum state they can match each other's transitions. For input
transitions this is because they can both input a single qubit, and
for output transitions it is trivial because neither process produces
any output. The actions within each process produce $\tau$
transitions, which are absorbed into the input transitions according
to the definition of probabilistic branching bisimulation.

Now consider $P$ and $Q$ in parallel with
$R(\typed{b}{\chant{\Qbit}}) = \outp{b}{q}\sep\nil$
in the quantum state
$\qstore{p,q}{\frac{1}{\sqrt{2}}(\ket{00}+\ket{11})}$. That is,
consider the configurations
\[
\begin{array}{ccc}
\cnfig{\qstore{p,q}{\frac{1}{\sqrt{2}}(\ket{00}+\ket{11})}}{p,q}{P\parallel
R} & ~~~~~~ & 
\cnfig{\qstore{p,q}{\frac{1}{\sqrt{2}}(\ket{00}+\ket{11})}}{p,q}{Q\parallel
R}
\end{array}
\]
The interesting situation is when the measurement in $P$ or $Q$ occurs
before the output in $R$.  Imagine, first, that the semantics of CQP
handles the measurement by producing a probability distribution;
recall that this is not the actual semantics of measurement.  In
$P\parallel R$ the quantum state before the measurement is
$\frac{1}{\sqrt{2}}(\ket{00}+\ket{11})$ and the state after the
measurement is either $\ket{00}$ or $\ket{11}$ with equal
probability. The qubit output by $R$ has reduced density matrix
$\matr{1}{0}{0}{0}$ or $\matr{0}{0}{0}{1}$. In $Q\parallel R$ the
quantum state before the measurement is
$\frac{1}{2}(\ket{00}+\ket{01}+\ket{10}-\ket{11})$ and the state after
the measurement is either $\frac{1}{\sqrt{2}}(\ket{00}+\ket{01})$ or
$\frac{1}{\sqrt{2}}(\ket{10}-\ket{11})$ with equal probability. The
qubit output by $R$ has reduced density matrix
$\frac{1}{2}\matr{1}{1}{1}{1}$ or $\frac{1}{2}\matr{1}{-1}{-1}{1}$. It
is therefore impossible for $P\parallel R$ and $Q\parallel R$ to match
each other's outputs.

Actually, of course, the semantics of CQP does not produce a
probability distribution in this case, because the result of the
measurement is not output. Instead, from $P\parallel R$ we get the
mixed configuration
\begin{equation}
\frac{1}{2}\cnfig{\qstore{p,q}{\ket{00}}}{p,q}{\nil\parallel\outp{b}{q}\sep\nil}
\oplus \frac{1}{2}\cnfig{\qstore{p,q}{\ket{11}}}{p,q}{\nil\parallel\outp{b}{q}\sep\nil}
\end{equation}
and from $Q\parallel R$ we get the mixed configuration
\begin{equation}
\frac{1}{2}\cnfig{\qstore{p,q}{\frac{1}{\sqrt{2}}(\ket{00}+\ket{01})}}{p,q}{\nil\parallel\outp{b}{q}\sep\nil}
\oplus \frac{1}{2}\cnfig{\qstore{p,q}{\frac{1}{\sqrt{2}}(\ket{10}-\ket{11})}}{p,q}{\nil\parallel\outp{b}{q}\sep\nil}.
\end{equation}
The calculation of the reduced density matrix of the qubit output by
$R$, taking into account the contributions of each component of the
mixed configuration, gives $\matr{1}{0}{0}{1}$ in both cases. This
enables $P\parallel R$ and $Q\parallel R$ to match each other's
outputs, and in fact (although we do not show it here), it is
straightforward to construct a probabilistic branching bisimulation
relation containing $(P\parallel R, Q\parallel R)$.

\subsection{Correctness of QECC}
\label{sec:errorcorrectionequivalence}
We now sketch the proof that $\pQECC \fpbsim \pidentity$, which by
Theorem~\ref{thm:congruence} implies that the error correction system
works in any context. An interesting consequence is that the qubit
being transmitted may be part of any quantum state, meaning that it is
correctly transmitted with error correction even if it is entangled
with other qubits; the entanglement is also preserved by the error
correction system. This property of error correction, although easily
verified by hand, is not usually stated explicitly in the literature.

\begin{proposition}
  $\pQECC \fpbsim \pidentity$.
\end{proposition}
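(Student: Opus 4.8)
The plan is to exhibit an explicit probabilistic branching bisimulation $\mathcal{R}$ relating the configurations arising from $\pQECC$ to those arising from $\pidentity$, for every initial quantum state $\sigma$ and every substitution $\kappa$ of values for free variables (there are no free variables here beyond channel parameters, so $\kappa$ acts trivially on the relevant terms, but we keep it for the definition of $\fpbsim$). Since $\pQECC$ and $\pidentity$ have no free qubit names, we start from $(\sigma; \emptyset; \pQECC)$ and $(\sigma; \emptyset; \pidentity)$ and must show these are related by some probabilistic branching bisimulation. The overall strategy is: (i) identify a small family of ``checkpoint'' configurations of $\pQECC$ along the lines of the execution sketch in Section~2.3; (ii) pair each with the corresponding configuration of $\pidentity$ (either before input, or after input with $x$ owned and about to be output); (iii) close this family under the transition relation and verify conditions I--IV of Definition~\ref{def:pbb}.

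The key steps, in order. \textbf{Step 1 (before input).} Both systems, before receiving on $a$, perform only $\tau$-transitions: $\pQECC$ allocates the internal qubits $y,z,u,v,s,t$ via $\qbit$ declarations, while $\pidentity$ does nothing. These are matched by condition~I (the $\weaktrans{}$ part absorbs the allocations). \textbf{Step 2 (input on $a$).} When a qubit $x$ in an arbitrary state (possibly entangled with environment qubits $\vec{r}$) is received, $\pidentity$ moves to $(\sigma'; x; \outp{d}{x}.\nil)$, and $\pQECC$ moves to the configuration shown in Section~2.3 with $x$ now part of the global state. Condition~III requires matching the input label $\inp{a}{\vec v}$ exactly, which both do, and relating the resulting configurations. \textbf{Step 3 (internal evolution of $\pQECC$).} This is the heart of the argument: run $\pQECC$ through all its internal $\tau$-transitions — Alice's two $\gCnot$s, the $\pnoisernd$ Hadamards and measurements (producing a mixed configuration with weights $\tfrac14$ over $j,k\in\{0,1\}$), the communication on $b$, the conditional $\gX$ corruptions, the communication on $c$, Bob's four $\gCnot$s, the syndrome measurement of $s,t$ (deterministic in each component, so no new mixedness), the communication on the internal channel, and Bob's correction $\gCnot$s. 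The execution sketch already records the endpoint: a mixed configuration $\oplus_{j,k}\tfrac14\,\qstore{\vec q}{\alpha\ket{000jkjk}+\beta\ket{100jkjk}}$ with process term $\outp{d}{x}.\nil$ in every component. We must verify that along this whole $\tau$-sequence, condition~I is met: every intermediate $\pQECC$-configuration is related to $(\sigma'; x; \outp{d}{x}.\nil)$, which is fine because $\pidentity$ can match with zero $\tau$-steps (using the $\weaktrans{} \opttrans{\tau}$ slack), provided the pair is in $\mathcal{R}$ — so we must include all these intermediates in $\mathcal{R}$. \textbf{Step 4 (output on $d$).} From the final mixed configuration, $\pQECC$ performs $\outp{d}{\{x\}, x}$. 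Condition~II requires that the reduced density matrix $\rhoe$ of the output qubit agree: for $\pidentity$ it is $\ketbra\psi\psi$ restricted appropriately (i.e.\ the original density matrix of $x$, including any entanglement with $\vec r$); for $\pQECC$ we compute $\rho^{x,\vec r}$ of the mixed state, sum the four $\tfrac14$-weighted contributions, and check it equals the original. This is the ``error correction works'' computation, and the reduced matrix of $x$ (together with its correlations with $\vec r$) is indeed unchanged because in each component $x$ carries $\alpha\ket0+\beta\ket1$ and is separable from $y,z,u,v,s,t$. Since the output produces a trivial (single-value) distribution, condition~II reduces to matching this one density matrix and relating the two post-output configurations $(\sigma''; \emptyset; \nil)$ and its $\pQECC$ analogue — both of which have no further transitions. \textbf{Step 5 (probabilistic configurations).} Condition~IV is vacuous along the main line since $\pQECC$ never produces a genuine $\rightsquigarrow$ (the only probabilistic branching would come from revealing a measurement outcome to the observer, and all measurements here are either internal or yield deterministic syndromes); so $\pstates$ is not entered and IV holds trivially. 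Finally we check $\mathcal{R}$ is an equivalence relation (take its reflexive-symmetric-transitive closure if needed) and contains the required starting pair.

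\textbf{Main obstacle.} The hard part is Step~3 together with Step~4: carefully tracking the mixed configuration through the conditional-$\gX$ corruption, the syndrome measurement, and Bob's correction, and then verifying that the weighted sum of reduced density matrices $\sum_{j,k}\tfrac14\,\rho^{x,\vec r}$ equals the original reduced density matrix of $x$ (with entanglement preserved). This requires being precise about (a) exactly which qubit the $\gX^{jk}$, $\gX^{j\overline k}$, $\gX^{\overline jk}$ pattern corrupts for each $(j,k)$, (b) that Bob's syndrome measurement recovers $j,k$ deterministically in each branch — so no spurious probabilistic branching is introduced, keeping us out of $\pstates$ and making condition~IV trivial — and (c) that the correction returns $x$ to $\alpha\ket0+\beta\ket1$, separable from the ancillas, in every component. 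A subsidiary subtlety is bookkeeping the $\new$-bound channels $b,c$ and the two internal $p$'s when composing the labelled transitions of the parallel components, and confirming that the typability side-condition of Theorem~\ref{thm:congruence} is what licenses treating $x$'s entanglement partners $\vec r$ as untouched environment. Once the bisimulation relation is laid out, each of conditions I--IV is a routine check; assembling the finite list of related configuration-pairs that is closed under transitions is the real work.
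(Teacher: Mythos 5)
Your proposal is correct and follows essentially the same route as the paper: the paper defines the bisimulation $\mathcal{R}$ by taking its equivalence classes to be the sets of configurations reachable from either process after each prefix of the observable trace (before input, after $\inp{a}{x}$, after $\outp{d}{x}$), which is exactly your checkpoint-and-closure construction, and it likewise absorbs all internal $\tau$-steps via condition I, matches the output via equality of reduced density matrices, notes that condition IV is vacuous since no measurement result is ever revealed externally, and dispenses with substitutions because the processes have no free variables. The only cosmetic difference is that the paper obtains an equivalence relation directly by specifying the classes rather than closing a pairing afterwards.
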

 \begin{proof}
 First we prove that $\pQECC \pbsim \pidentity$,
   by defining an equivalence relation $\mathcal{R}$ that contains the pair
   $(\cnfig{\sigma}{\emptyset}{\pQECC},\cnfig{\sigma}{\emptyset}{\pidentity})$
   for all $\sigma$ and is closed under their transitions. 
   $\mathcal{R}$ is defined by taking its equivalence classes to be the
   $S_i(\sigma)$ defined below, for all states $\sigma$. The idea is to
   group configurations according to the sequences of observable
   transitions leading to them. $S_2$ is also
   parameterized by the input qubit, as this affects the output qubit
   and hence the equivalence class.
 \[
 \begin{array}{rcl}
 \pname{S_{1}(\sigma)} & = & \{s \mid
 (\sigma;\emptyset;P)\weaktrans{}s ~\mbox{and}~ P \in \{\pQECC,\pidentity\}\}\\
 \pname{S_{2}(\sigma,x)} & = & \{s  \mid  (\sigma;\emptyset;P)
 \weaktrans{\inp{a}{x}}s ~\mbox{and}~ P \in \{\pQECC,\pidentity\}\} \\
 \pname{S_{3}(\sigma)} & = & \{s  \mid  (\sigma;\emptyset;P)
 \weaktrans{\inp{a}{x}}\weaktrans{\outp{d}{x}}s ~\mbox{and}~ P \in \{\pQECC,\pidentity\}\}
 \end{array}
 \]
 We now prove that $\mathcal{R}$ is a probabilistic branching
 bisimulation. It suffices to consider transitions between $S_i$
 classes, as transitions within classes must be $\tau$ and are matched
 by $\tau$.  If $s, t \in S_{1}(\sigma)$ and
 $s\transition{\inp{a}{x}}s'$ then $s'\in S_{2}(\sigma)$ and we find
 $t',t''$ such that $t\weaktrans{}t'\transition{\inp{a}{x}}t''$ with
 $t'\in S_{1}(\sigma)$ and $t''\in S_{2}(\sigma)$, so $(s,t')\in
 \mathcal{R}$ and $(s',t'')\in\mathcal{R}$ as required. Transitions
 from $S_2(\sigma)$ are matched similarly. There are no
 transitions from $S_3(\sigma)$.


There is no need for a probability calculation (case IV of
Definition~\ref{def:pbb}) because no
probabilistic configurations arise; measurement results are always
communicated internally, and never to the external environment.

Finally, because $\pQECC$ and $\pidentity$ have no free variables,
their equivalence is trivially preserved by substitutions.\qed
\end{proof}
 


\section{Error Correction: A Second Model}
\label{sec-example}
\label{sec:example}
We now consider a different noise model in which random $\gX$ errors
are applied independently to each of the three qubits being
transmitted. The new definition of $\pnoise$ is shown below; we use
the original definitions of $\palice$ and $\pbob$; the overall system
is now $\pQECC2$.
\[
\begin{array}{ll}
\multicolumn{2}{l}{\pnoisernd(\typed{p}{\chant{\bit,\bit,\bit}})
=}\\
~ & (\qbit
u,v,w)\sep\action{u\trans\gH}\sep\action{v\trans\gH}\sep\action{w\trans\gH}\sep\outp{p}{\measure
  u,\measure v,\measure w}\sep\nil \\
\multicolumn{2}{l}{\pnoiseerr(\typed{b}{\chant{\Qbit,\Qbit,\Qbit}},\typed{p}{\chant{\bit,\bit,\bit}},\typed{c}{\chant{\Qbit,\Qbit,\Qbit}})
  =}\\
& \inp{b}{\typed{x}{\Qbit},\typed{y}{\Qbit},\typed{z}{\Qbit}}\sep\inp{p}{\typed{j}{\bit},\typed{k}{\bit},\typed{l}{\bit}}\sep\action{x\trans\gX^{j}}\sep\action{y\trans\gX^{k}}\sep\action{z\trans\gX^{l}}\sep\outp{c}{x,y,z}\sep\nil \\
\multicolumn{2}{l}{\pnoise(\typed{b}{\chant{\Qbit,\Qbit,\Qbit}},\typed{c}{\chant{\Qbit,\Qbit,\Qbit}})
  = (\new p)(\pnoisernd(p) \parallel \pnoiseerr(b,p,c))}
\end{array}
\]
\[
\begin{array}{rcl}
\pQECC2(\typed{a}{\chant{\Qbit}},\typed{d}{\chant{\Qbit}}) & = &
(\new b,c)(\palice(a,b) \parallel
\pnoise(b,c) \parallel \pbob(c,d))
\end{array}
\]
The threefold repetition code is not able to correct multiple errors,
so we do not have $\pQECC2 \fpbsim \pidentity$. The error correction
system has a probability of $\frac{1}{2}$ of transmitting a qubit with
an $\gX$ error. We can express this in CQP by using $\pbitflip$ as a
specification process:
\[
\begin{array}{rcl}
\prnd(\typed{p}{\chant{\bit}}) & = & (\qbit u)\action{u\trans\gH}\sep\outp{p}{\measure u}\sep\nil \\
\pflip(\typed{a}{\chant{\Qbit}},\typed{p}{\chant{\bit}},{\typed{d}{\chant{\Qbit}}})& = & \inp{a}{\typed{x}{\Qbit}}\sep\inp{p}{\typed{j}{\bit}}\sep\action{x\trans\gX^i}\sep\outp{d}{x}\sep\nil\\
\pbitflip(\typed{a}{\chant{\Qbit}},\typed{d}{\chant{\Qbit}})  & = &
(\new p)(\prnd(p) \parallel\pflip(a,p,d))
\end{array}
\]
and by very similar arguments to before, we obtain:
\begin{proposition}
$\pQECC2 \fpbsim \pbitflip$.
\end{proposition}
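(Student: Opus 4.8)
The plan is to imitate the proof of the previous proposition ($\pQECC \fpbsim \pidentity$) line for line, with $\pidentity$ replaced by $\pbitflip$ and with the internal mixture now having eight components instead of four. First I would establish $\pQECC2 \pbsim \pbitflip$ by exhibiting a probabilistic branching bisimulation $\mathcal{R}$ whose non-trivial equivalence classes are, for every quantum state $\sigma$,
\[
\begin{array}{rcl}
S_1(\sigma) & = & \{\, s \mid (\sigma;\emptyset;P)\weaktrans{}s,\ P\in\{\pQECC2,\pbitflip\} \,\},\\
S_2(\sigma,x) & = & \{\, s \mid (\sigma;\emptyset;P)\weaktrans{\inp{a}{x}}s,\ P\in\{\pQECC2,\pbitflip\} \,\},\\
S_3(\sigma) & = & \{\, s \mid (\sigma;\emptyset;P)\weaktrans{\inp{a}{x}}\weaktrans{\outp{d}{x}}s,\ P\in\{\pQECC2,\pbitflip\} \,\}
\end{array}
\]
with all other configurations sitting in singleton classes. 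As before, $S_2$ is parameterised by the received qubit $x$ because the state in which $x$ is eventually re-emitted depends on it. Checking that $\mathcal{R}$ is a probabilistic branching bisimulation is then routine: the only transitions between classes are the input $\inp{a}{x}$ taking $S_1(\sigma)$ to $S_2(\sigma,x)$ and the output $\outp{d}{x}$ taking $S_2(\sigma,x)$ to $S_3(\sigma)$, each side can perform both, transitions inside a class are $\tau$ and are matched by zero $\tau$ steps, and condition~IV of Definition~\ref{def:pbb} is vacuous because every measurement result --- the three bits of $\pnoisernd$ and the two syndrome bits of $\pbobrec$ on the $\pQECC2$ side, and the one bit of $\prnd$ on the $\pbitflip$ side --- is communicated on a restricted channel and never reaches the environment, so no probability distribution over configurations is ever produced; the output on $d$ is a qubit name, the same in every component, so it leaves a mixed configuration mixed.

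The content of the proof is the quantum calculation sitting on the $S_2(\sigma,x)\to S_3(\sigma)$ step, where condition~II requires the reduced density matrix of the emitted qubit (and of the environment) to agree on the two sides. Executing $\pQECC2$ symbolically as in the sample execution of $\pQECC$: $\palice$'s two $\qgate{CNot}$s repetition-encode the received qubit, so if its state together with the environment is $\ket{\Psi}$ then after encoding the carriers hold the corresponding encoded state; $\pnoisernd$'s three $\gH$-then-measure steps split the configuration into a mixed configuration with weights $\tfrac18$ indexed by the error triple $(j,k,l)$, and $\pnoiseerr$ applies $\gX^j,\gX^k,\gX^l$ within each component; $\pbobrec$'s syndrome $\qgate{CNot}$s and measurements have a deterministic outcome in each component, since the syndrome is a function of $(j,k,l)$ and is identical on the two branches of the superposition, so no further mixedness appears; and $\pbobcorr$ applies the syndrome correction and undoes the encoding. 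The point is that of the eight equiprobable patterns the four with at most one $\gX$ restore the carrier qubit exactly while the four with two or three $\gX$s are mis-decoded by the majority vote and leave it acted on by $\gX$, all ancillas returning to fixed separable states; hence the residual mixed state on the emitted qubit together with the environment is $\tfrac12\ketbra{\Psi}{\Psi}+\tfrac12\,\gX_x\ketbra{\Psi}{\Psi}\gX_x$, equivalently reduced density matrix $\tfrac12\rho_x+\tfrac12\,\gX\rho_x\gX$ on the emitted qubit. Executing $\pbitflip$ gives the same thing immediately: $\prnd$ produces a two-component mixture with weights $\tfrac12$ over a single bit $j$, and $\pflip$ applies $\gX^j$ to the received qubit, so the residual mixed state is again $\tfrac12\ketbra{\Psi}{\Psi}+\tfrac12\,\gX_x\ketbra{\Psi}{\Psi}\gX_x$. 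Since these reduced density matrices coincide, the output transitions match in the sense of condition~II and land in $S_3(\sigma)$.

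Finally, since $\pQECC2$ and $\pbitflip$ are closed, $\pbsim$ upgrades to $\fpbsim$ with no further work, and Theorem~\ref{thm:congruence} then yields robustness under arbitrary typable contexts. The main obstacle is the bookkeeping in the middle step: one has to carry the mixed-configuration structure --- the $\oplus$ over $(j,k,l)$, the owned-qubit lists, and the abstracted measurement variables --- faithfully through $\palice$'s encoding, $\pnoise$'s corruption and $\pbob$'s syndrome extraction and correction, and verify that this eight-component object and the two-component object produced by $\pbitflip$ nonetheless carry the same reduced density matrix on the transmitted qubit. The conceptual fact underneath it --- that a threefold-repetition majority decoder is correct precisely on the error patterns of weight at most one --- is elementary; the care is all in matching it, inside a single equivalence class, against the coarser mixture produced by the specification. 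Everything else is a transcription of the previous proposition's argument.
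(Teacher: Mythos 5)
Your proposal is correct and follows essentially the same route as the paper, which itself only says the result is obtained ``by very similar arguments to before'': the same three equivalence classes $S_1,S_2,S_3$ with $\pbitflip$ in place of $\pidentity$, the observation that condition~IV is vacuous since all measurement outcomes stay on restricted channels, and the matching of reduced density matrices at the output, which the paper likewise identifies as the equal mixture $\tfrac12\rho_x+\tfrac12\gX\rho_x\gX$ on both sides. Your accounting of the eight error patterns (four of weight at most one corrected, four mis-decoded) is exactly the quantum content the paper relies on.
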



There is still no probability calculation because the results of the
measurements in $\pnoisernd$ and $\prnd$ are not output. The equal
probability of correct and incorrect transmission manifests itself in
the fact that the reduced density matrix of the final output qubit,
from both $\pQECC2$ and $\pbitflip$, is an equal mixture of the input
qubit and its inverse. The only way to introduce probability into
this example is for $\pflip$ to observably output $j$ and $\pnoiseerr$
to observably output the majority value of $j,k,l$, before the final
qubit output. 

We know from the standard analysis of this error correction system
that if the independent probability of flipping each qubit is
$p<\frac{1}{2}$, $\pQECC2$ reduces the overall probability of a
bit-flip error to $p^2(3-2p) < p$. With a slightly more complicated
analysis we could also express this property in CQP.

\section{Conclusion and Future Work}
\label{sec-conclusion}
\label{sec:conclusion}
We have explained the use of the process calculus CQP, and its theory
of behavioural equivalence, in analyzing the correctness of quantum
communication systems. We have summarized the theory, which
is presented in full detail in \cite{DavidsonThesis}, and given two
examples based on a simple quantum error correcting code.

Quantum error correction can easily be analyzed by pen and paper, but
the point of process calculus is that it forms part of a systematic
methodology for verification of quantum systems. In particular, the
congruence property of behavioural equivalence explicitly guarantees
that equivalent processes remain equivalent in any context, and
supports equational reasoning. For example: we have shown that $\pQECC
\fpbsim \pidentity$; there is a proof in \cite{DavidsonThesis} that
$\pteleport \fpbsim \pidentity$; so we have, for free, that
$\pQECC \fpbsim \pteleport$, in any context. Because CQP can also
express classical behaviour, we have a uniform framework in which to
analyze classical and quantum computation and communication.

The next steps for this line of research are to develop equational
axiomatizations of behavioural equivalence, in order to reduce the
need to explicitly construct bisimulation relations, and to develop
software for automatic verification of equivalence. Both of these
techniques are already well established for classical process
calculus.

\bibliographystyle{eptcs}

\bibliography{main}

\end{document}